\documentclass[notnumberedtheorems,withtitlethanks]{actacyb}
%\documentclass{actacybpress}

% Provides line numbers for the referee.
%\referee

% Draws the border of textbox.
%\usepackage{showframe}

\usepackage{graphicx, algorithm, algorithmic}
% Please use only packages that are absolutely necessary!
% Avoid modifying the page size and layout!
\usepackage{bbm}
\usepackage{amsmath} 
\newcommand{\indic}{\mathbbm{1}}
\newcommand{\jecko}{\hat{\jmath}}
\newcommand{\jedna}{1}
\newcommand{\real}{\mathbbm{R}}
\DeclareMathOperator*{\argmin}{\arg\!\min}
\DeclareMathOperator*{\argmax}{\arg\!\max}
\newcommand{\tuci}{{\cal I}}%{\boldsymbol{\cal I}}
\newtheorem{model}{Model} %[theorem]
\usepackage{rotating}
%\usepackage{natbib}

% Class file actacyb.cls requires the hyperref package.
% Here we define metadata and colors used by hyperref.
\hypersetup{
    unicode=true,            % non-Latin characters in Acrobat’s bookmarks
    pdftitle={Title},        % title
    pdfauthor={Authors},     % author
    pdfsubject={Manuscript submitted to Acta Cybernetica},   % subject of the document
    colorlinks=true,         % false: boxed links; true: colored links
    linkcolor=blue,          % color of internal links (change box color with linkbordercolor)
    citecolor=blue,          % color of links to bibliography
    filecolor=blue,          % color of file links
    urlcolor=blue            % color of external links
}

\begin{document}

% Overfull boxes can be avoided by using the \sloppy command.
% Use it as a last chance only since it produces underfull boxes.
% \fussy resets to the normal setting.
%\sloppy
%...
%\fussy

%% Uncomment commands below once the required information is known
%% Used by the technical editor
%\acta{vol}{num}{year}{paper_id}
%\acta{vol}{num}{year}{paper_id}
%\setcounter{page}{1}
%\received{...} 

\title{A Bayesian Framework for Regularized Estimation in Multivariate Models Integrating Approximate Computing Concepts}
%\title{The Title of the Article\thanks{This work was supported by... If such a thanks text is included, use the {\tt withtitlethanks} option of the {\tt actacyb} document class!}}
\headingtitle{Bayesian Framework for Regularized Multivariate Estimation}

\author{
Jan Kalina\thanks{The Czech Academy of Sciences, Institute of Computer Science, Prague, Czech Republic} \thanks{\email{kalina@cs.cas.cz}, \orcid{0000-0002-8491-0364}} %, Second Author\thanksmark{2} \thanks{\email{secondauth@email.address}, \orcid{xxxx-xxxx-xxxx-xxxx}} ,
%\and Third Author\thanks{Affiliation of third author} \thanks{\email{thirdauth@email.adress}, \orcid{https://orcid.org/xxxx-xxxx-xxxx-xxxx}}
}
\headingauthor{Jan Kalina}

% Please provide ORCIDs for all authors!

\maketitle

\begin{abstract}
This paper discusses regularized estimators in the multivariate statistical model as tools naturally arising within a Bayesian framework. 
First, a link is established between Bayesian estimation and inference under parameter rounding (quantization), thereby connecting two distinct paradigms: Bayesian inference and approximate computing.
Next, Bayesian estimation of the means from two independent multivariate normal samples is employed to justify shrinkage estimators, i.e., means shrunk toward the pooled mean.
Finally, regularized linear discriminant analysis (LDA) is considered. Various shrinkage strategies for the mean are justified from a Bayesian perspective, and novel algorithms for their computation are proposed.
The proposed methods are illustrated by numerical experiments on real and simulated data.

\keywords{multivariate data, Bayesian estimation, regularization, classification analysis, algorithms, shrinkage estimators}
%\keywords{Multivariate data \and Bayesian estimation \and regularization \and classification analysis \and algorithms \and shrinkage estimators}
\end{abstract}

\section{Introduction} %%%%%%%%%%%%%%%%%%%%%%%%%%%%%%%%%%%%%%%%%%%%%%%%%%%%%%%%%%%%%%%%%%%%%%%%%%%%%%%%%%%%%%%%%%%%%%%%%%%%%%%%%%%%%%%%%%%%%%%%%%%%%%%%%%% kap. 1

In a~variety of models of multivariate statistics and machine learning, regularized estimators may be obtained in the Bayesian statistical setup \cite{cal}.
This paper is interested in finding interpretations and relationships between various types of estimators in the multivariate model including an extension for a~classification model with data observed in several different groups. 

Regularized (shrinkage) estimators are not only crucial for high-dimensional data analysis, where classical methods often become unstable or infeasible, but also beneficial even when working with small datasets, as they tend to improve predictive accuracy \cite{val}. In a wide range of probabilistic models, regularization emerges naturally from Bayesian reasoning: prior knowledge is incorporated directly into the estimation process, resulting in estimates that are pulled (shrunk) towards a~predefined reference value, typically the prior expectation~\cite{hos}.

A shrinkage estimator is a statistical estimator that improves estimation accuracy by combining observed data with additional information or assumptions, effectively "shrinking" raw estimates toward a predetermined target or prior value. This shrinkage reduces estimation variance, often at the cost of introducing some bias, but typically results in a lower overall mean squared error compared to unbiased estimators.

This form of regularization helps reduce variance and improve generalization performance, which is especially valuable when the data are noisy. In multivariate settings, such Bayesian-inspired regularization enhances the stability and robustness of the estimators \cite{joh}. We regard this as a natural and fruitful synthesis of Bayesian ideas and modern regularization techniques, aligning well with the goals of reliable and interpretable data analysis \cite{has}.

As a novelty, a connection between Bayesian estimators and quantized estimation, i.e.~estimation contaminated by additional uncertainty or error, is established here. Moreover, we focus on shrinkage estimators for the means of independent multivariate normal samples and subsequently on regularized linear discriminant analysis, which frequently employs shrinkage techniques and has found many applications in high-dimensional data \cite{fu}. In particular, we justify the use of shrunken means within a specific Bayesian model and formulate two algorithms for regularized linear discriminant analysis exploiting linear algebra manipulations.

This paper has the following structure. Section~\ref{sec:mult} is devoted to Bayesian estimation of parameters of the multivariate normal distribution. 
In Section~\ref{sec:rounding}, an~original connection between Bayesian estimation and estimation in a~model with rounding (quantization) of the estimated parameters is established.
In Section~\ref{sec:two}, Bayesian estimation for two independent normally distributed samples is derived, which theoretically justifies using regularized estimators in the given context and also in the context of
regularized linear discriminant analysis of Section~\ref{sec:lda}. 
Section~\ref{sec:ex} presents experiments over a real and a simulated dataset. Section~\ref{sec:con} concludes the paper.

\section{Bayesian estimation in the multivariate model} %%%%%%%%%%%%%%%%%%%%%%%%%%%%%%%%%%%%%%%%%%%%%%%%%%%%%%%%%%%%%%%%%%%%%%%%%%%%%%%%%%%%%%%%%%%%%%%%%%%%%%%%%%%%%%%%%%%%%%%%%%%%%%%%%%%%%%%%%%%%%%%%%%%%%%%%% kap. 2
\label{sec:mult}

Let us assume the multivariate model 
\begin{equation}
X_i \sim {\sf N}_p(\mu,\Sigma), \quad i=1,\dots,n,
\label{e:mult}
\end{equation}
with i.i.d.~$p$-variate random vectors following the $p$-variate normal distribution with $n>p$.
Let $X\in\real^{n \times p}$ denote the data matrix with rows $X_1,\dots,X_n$.
The covariance matrix $\Sigma$ has to fulfil $\Sigma \in {\sf PD(p)}$, where ${\sf PD}(p)$ denotes the set of all positive definite symmetric matrices of size $p \times p$.

This section discusses Bayesian estimation of the mean vector \(\mu \in \real^p\) and the covariance matrix \(\Sigma\). The main purpose here is to review the Bayesian estimation of these parameters, serving as a foundation for the subsequent sections. For a~detailed description of Bayesian estimation principles, we refer the reader to the book \cite{joh}.

The sample mean has the form
\begin{equation}
\bar{X}=(\bar{X}_1,\dots,\bar{X}_p)^T  = \frac{1}{n} X^T \jecko_n  = \frac{1}{n}\left(  \jecko_n^TX \right)^T \in \real^p, \quad \mbox{where} \quad \bar{X}_j = \frac{1}{n} \sum_{i=1}^n X_{ij}
\label{e:mean}
\end{equation}
and $\jecko_n=(1,\dots,1)^T\in\real^n$. 
Also, the maximum likelihood estimator in (\ref{e:mult}) is the sample mean. In addition, 
minimizing the sum of squared residuals
\begin{equation}
\argmin_{\tilde{\mu}\in\real^p} \sum_{i=1}^n (X_i - \tilde{\mu})^2
\end{equation}
leads again to the sample mean as the resulting estimator.

Under the assumption of a~known~$\Sigma$, let us consider the normal prior
\begin{equation}
\mu \sim {\sf N}_p(\theta, \eta)
\end{equation}
with known $\theta\in\real^p$ and $\eta\in{\sf PD}(p)$. 
This is a conjugate prior for the multivariate normal likelihood with known covariance, meaning that the posterior distribution belongs to the same parametric family (i.e.~multivariate normal).
The use of a~conjugate prior is advantageous because it leads to a closed-form expression for the posterior, which simplifies both analytical derivations and computational implementation of Bayesian inference.
The Bayesian estimator $\hat{\mu}$ of the parameter~$\mu$ obtained as the mean of the posterior distribution is a shrinkage estimator of $\mu$ shrunken towards~$\theta$. 
It depends on the nuisance covariance matrix~$\Sigma$. The formula for the estimator, which was provided e.g.~in the seminal paper \cite{eva}, is given by
\begin{equation}
\hat{\mu} =(n\Sigma^{-1}+\eta^{-1})^{-1} (n \Sigma^{-1} \bar{X}+ \eta^{-1} \theta).
\label{e:slunce}
\end{equation}

\begin{remark}
If assuming $\eta^{-1}=c \Sigma^{-1}$ for some $c>0$, the estimator (\ref{e:slunce}) does not depend on $\Sigma$, because (\ref{e:slunce}) simplifies to
\begin{equation}
\hat{\mu}=(1-\delta)\bar{X} + \delta \theta, \quad\mbox{where}\quad \delta=\frac{c}{n+c}.
\label{e:bayspe}
\end{equation}
\end{remark}

The James--Stein estimator is a biased estimator of $\mu$ with a smaller quadratic risk than the mean \cite{jam} for $p \geq 3$.
Let us assume a single $X \sim {\sf N}_p(\mu, \sigma^2\tuci_p)$ with a~known $\sigma^2>0$, where $\tuci_p$ denotes the identity matrix of dimension $p$. The estimator is formulated as an estimator shrunken towards 0 in the form
\begin{equation}
\left(1-\frac{(p-2)\sigma^2}{||X||^2_2}\right)X,
\label{e:js}
\end{equation}
which has the structure of the Bayesian estimator in (\ref{e:bayspe}) obtained with a Gaussian prior with a zero expectation.

The James--Stein estimator (\ref{e:js}) is a foundational result demonstrating the advantages of shrinkage estimation under the assumption of a known diagonal covariance matrix with equal variances $\sigma^2$ across all coordinates. Although these assumptions may not hold in all practical scenarios, the estimator’s core principles have inspired a wide range of effective regularization methods that are widely used today.

An analogous shrinkage estimator can also be defined for the sample mean of a Gaussian distribution when the covariance matrix is diagonal and known, but the coordinate-wise variances $\sigma_1^2, \ldots, \sigma_p^2$ may differ \cite{efr}. This extension relaxes the assumption of homoscedasticity and allows for heterogeneous variability across coordinates. The classical James--Stein estimator nevertheless remains a cornerstone theoretical result, providing strong justification for the use of regularized (i.e. Bayesian) estimators in multivariate settings \cite{jam}.

Building on the concept of shrinkage in (\ref{e:js}), many popular regularized estimators have been developed, including ridge regression and lasso regression. These ideas also form the basis of modern regularization techniques frequently employed in machine learning, such as bagging, boosting, and weight decay in deep learning~\cite{has}.
In practice, when parameter variability such as variance is unknown, Empirical Bayes methods offer a useful alternative to the Stein estimator. They estimate hyperparameters directly from the data and enable adaptive, data-driven shrinkage~\cite{riz}.

If the task is to estimate the covariance matrix~$\Sigma$, it is common—especially in high-dimensional applications—to apply Tikhonov regularization, replacing the empirical covariance matrix~$S$ by
\begin{equation}
\tilde{S} = (1-\lambda)S+\lambda T \in {\sf PD}(p)
\label{e:shrsigma}
\end{equation}
with a given regular target matrix~$T\in{\sf PD}(p)$, depending on a tuning parameter $\lambda \in [0,1)$. This regularized estimator~$\tilde{S}$ is guaranteed to be positive definite and well-conditioned, even when $S$ is singular or ill-conditioned due to a limited sample size or high dimensionality. This not only ensures the existence and uniqueness of subsequent solutions (e.g.~in discriminant analysis), but also reduces the sensitivity of the model to noise, improves numerical stability, and mitigates overfitting.

Moreover, $\tilde{S}$ can also be interpreted from a Bayesian perspective \cite{cal}, where the target matrix~$T$ corresponds to the covariance matrix of the prior distribution of the random~$\Sigma$. The asymptotically optimal value of $\lambda$ minimizing the mean squared error in~(\ref{e:mult}) was derived in~\cite{led} for the case $T = \tuci_p$, eliminating the need for cross-validation. This result was later extended to more general choices of~$T$ in~\cite{sch}. Regularized estimators of~$\Sigma$ that are additionally robust to outliers have been proposed for data following elliptically symmetric unimodal distributions; see e.g.~\cite{gsc,bou}.
In practice, we often encounter situations where parameters are not estimated precisely but, for example, rounded (quantized). This leads us to consider how quantization relates to the Bayesian perspective.

%Bayesian estimation of $\Sigma$. If we know $\mu\in\real^p$ and the aim is to estimate~$\Sigma$,
%it is natural to consider that $\Sigma$ follows the inverse Wishart distribution 
%$W^{-1}_p(\nu+p-1,\Omega)$. It holds that
%\begin{equation}
%{\sf E}\Sigma^{-1}= (\nu+p-1) \Omega^{-1} \quad\mbox{and}\quad {\sf E}\Sigma= \Omega/(\nu-2). 
%\label{e:esigma}
%\end{equation}
%The posterior mean of $\Sigma$ is then equal to
%\begin{equation}
%\hat{\Sigma} = \frac{U+\Omega}{n+\nu-2},
%\label{e:hatsigma}
%\end{equation}
%where 
%\begin{equation}
%U= \sum_{i=1}^n (X_i-\mu) (X_i-\mu)^T.
%\end{equation}

\section{Connection between quantization and Bayesian estimation} %%%%%%%%%%%%%%%%%%%%%%%%%%%%%%%%%%%%%%%%%%%%%%%%%%%%%%%%%%%%%%%%%%%%%%%%%%%%%%%%%%%%%%%%%%%%%%%%%%%%%%%%%%%%%%%%%%%%%%%% kap. 3
\label{sec:rounding}

In this section, we formalize the connection between quantized estimation—that is, estimation involving rounding or discretization—and Bayesian estimation. Quantization introduces bias while potentially reducing variability, and it can be interpreted as a projection of the estimator onto a discrete set. Therefore, it is of interest to investigate whether quantization can be regarded as an implicit Bayesian estimator.

One common form of quantization, known as post-training rounding, has become a widely used strategy for reducing energy consumption and computational latency in statistical and machine learning models \cite{nah}. The key idea is to replace high-precision parameter estimates by their low-precision (typically integer or fixed-point) approximations while maintaining acceptable predictive performance.

%This approach to model comparison via reduced numerical precision falls within the broader framework of approximate computing, which studies how controlled approximations—such as rounding, injected noise, or intentional perturbations—affect the behavior and reliability of computational models. 
%This approach to model comparison via reduced numerical precision fits within the broader framework of approximate computing, which examines how controlled approximations—such as rounding, injected noise, or intentional perturbations—affect the behavior and reliability of computational models.
Model comparison via reduced numerical precision falls under approximate computing, which studies how controlled approximations (such as rounding, noise, or perturbations) affect model behavior and reliability.
In deep learning, such approximations are deliberately introduced to model small errors in learned parameters, aiming to simplify deployment on resource-constrained devices \cite{sze}. Similarly, when training is performed using low-precision arithmetic, rounding effects become inherent to the learning process and must be accounted for in both the design and evaluation of 
models~\cite{sah}.

Quantization is also gaining traction in probabilistic modeling and Bayesian inference. For example, a Bayesian regression model with quantized parameters was proposed in \cite{yan} for meteorological and hydrological applications, where the goal was to quantify how uncertainties in precipitation and temperature propagate through parameter estimates and affect predicted runoff in a snowmelt-driven watershed. More recently, numerical experiments have demonstrated how parameter quantization impacts inference in Bayesian 
networks~\cite{rib}.

We consider a multivariate model, which is a special case of~(\ref{e:mult}).
If the distribution of the error (e.g., due to rounding) is known, estimating~$\mu \in \real^p$ becomes a~Bayesian estimation problem.
The error distribution then plays the role of a~prior on~$\mu$ in the original, uncontaminated model.
The mathematical results highlight a~conceptual link between approximate computing and Bayesian inference.

\begin{model} %%% 1
Let us consider the model 
\begin{equation}
X_i = \mu + e_i, \quad i=1,\dots,n,\label{e:true1}
\end{equation}
with $X_i \in \real^p$, a fixed $\mu\in\real^p$, and i.i.d. random vectors $e_1,\dots,e_n$, which follow $e_i \sim {\sf N}_p(0, \sigma^2 \tuci_p)$ for $i=1,\dots,n$. Instead of the true model (\ref{e:true1}), the parameter estimation will be rather considered in the model 
\begin{equation}
X_i = \xi + \varepsilon_i, \quad i=1,\dots,n, \quad \mbox{where}~\xi~\mbox{is obtained as}~\xi=\mu+\tau
\end{equation}
with a random vector $\tau \sim {\sf N}_p(0,\delta^2 \tuci_p)$ independent of $\varepsilon_1,\dots,\varepsilon_n$.
\label{model1}
\end{model}

\begin{model} %%% 2
Let us consider the model 
\begin{equation}
X_i = \xi+e_i, \quad i=1,\dots,n,
\end{equation}
where i.i.d. random vectors $e_1,\dots,e_n$ follow $e_i \sim {\sf N}_p(0, \sigma^2 \tuci_p)$ for $i=1,\dots,n$ and $\xi\in\real^p$ is a random vector following $\xi \sim {\sf N}_p(\mu, \delta^2 \tuci_p)$ with a given $\mu\in\real^p$.
\label{model2}
\end{model}

\begin{theorem}
The posterior estimator of $\xi$ in Model~\ref{model1} is the same as the posterior estimator of $\xi$ in Model~\ref{model2}.
\label{th:1}
\end{theorem}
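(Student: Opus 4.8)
The plan is to observe that, once the reparametrization in Model~\ref{model1} is made explicit, both models specify \emph{the same} prior law for $\xi$ and \emph{the same} conditional law of the data given $\xi$; since a posterior is determined by Bayes' theorem solely through the product of prior density and likelihood, the two posterior distributions---and hence the two posterior means (the ``posterior estimators''), as well as any other posterior functional---must coincide. Before carrying this out I would fix notation, writing $\phi_p(\cdot\,;m,V)$ for the ${\sf N}_p(m,V)$ density and reading the unspecified $\varepsilon_i$ in Model~\ref{model1} as i.i.d.\ ${\sf N}_p(0,\sigma^2\tuci_p)$, as both the notation and the parallel with Model~\ref{model2} suggest.

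First I would identify the prior of $\xi$ in Model~\ref{model1}. There $\xi=\mu+\tau$ with $\mu\in\real^p$ \emph{deterministic} and $\tau\sim{\sf N}_p(0,\delta^2\tuci_p)$; an affine image of a Gaussian vector is Gaussian, so $\xi\sim{\sf N}_p(\mu,\delta^2\tuci_p)$, which is verbatim the prior postulated in Model~\ref{model2}. Next I would identify the likelihood. In Model~\ref{model1}, conditionally on $\xi$ one has $X_i=\xi+\varepsilon_i$ with the $\varepsilon_i$ i.i.d.\ and, by hypothesis, independent of $\tau$ and therefore of $\xi$; hence $X_i\mid\xi\sim{\sf N}_p(\xi,\sigma^2\tuci_p)$ i.i.d., giving the likelihood $L(\xi)=\prod_{i=1}^n\phi_p(X_i;\xi,\sigma^2\tuci_p)$. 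In Model~\ref{model2} the relation $X_i\mid\xi\sim{\sf N}_p(\xi,\sigma^2\tuci_p)$ i.i.d.\ holds by definition, so the two likelihood functions of $\xi$ agree identically for every realization of $X_1,\dots,X_n$.

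It then follows that the posterior densities $p(\xi\mid X)\propto\phi_p(\xi;\mu,\delta^2\tuci_p)\,L(\xi)$ are the same function of $\xi$ in both models, so the posteriors agree as probability measures and in particular have the same mean. If an explicit form is desired, I would substitute $\Sigma=\sigma^2\tuci_p$, $\eta=\delta^2\tuci_p$ and $\theta=\mu$ into~(\ref{e:slunce}), obtaining the common posterior estimator
\begin{equation}
\hat{\xi}=\frac{n\delta^2}{n\delta^2+\sigma^2}\,\bar{X}+\frac{\sigma^2}{n\delta^2+\sigma^2}\,\mu,
\end{equation}
a shrinkage estimator of the form~(\ref{e:bayspe}) with $\theta=\mu$.

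The only real obstacle here is interpretive rather than computational: one must read ``$\xi$ is obtained as $\xi=\mu+\tau$'' in Model~\ref{model1} as asserting that $\mu$ is non-random and that $\tau$ is the sole source of uncertainty in $\xi$, and one must use the stipulated independence of $\tau$ from $\varepsilon_1,\dots,\varepsilon_n$ to be sure that the conditional law of $X_i$ given $\xi$ carries no residual dependence on $\tau$. Once these points are pinned down, the equality of the posterior estimators is immediate; the argument is essentially bookkeeping that matches the prior and the likelihood across the two models.
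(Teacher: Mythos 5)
Your proposal is correct and follows essentially the same route as the paper's proof, whose sole stated idea is that $\xi=\mu+\tau$ with constant $\mu$ and $\tau\sim{\sf N}_p(0,\delta^2\tuci_p)$ gives $\xi\sim{\sf N}_p(\mu,\delta^2\tuci_p)$, i.e.\ the prior of Model~\ref{model2}. You simply make explicit the remaining bookkeeping (matching likelihoods and invoking Bayes' theorem) that the paper leaves implicit, and your closed-form posterior mean is consistent with~(\ref{e:slunce}).
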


\begin{proof}
%The idea is that $\xi$ defined as $\xi=\mu+\tau$, where $\mu$ is a constant and $\tau \sim {\sf N}(0,\delta^2\tuci_p)$, fulfills $\xi \sim {\sf N}_p(\mu, \delta^2 \tuci_p)$.
The idea is that a random vector $\xi$ defined as $\xi = \mu + \tau$,  
where $\mu$ is a~constant vector and $\tau \sim \mathcal{N}(0, \delta^2 I_p)$,  
satisfies $\xi \sim \mathcal{N}_p(\mu, \delta^2 I_p)$.
\end{proof}

\vspace{3mm}%%%%%%%%
%Theorem~\ref{th:1} is formally valid but its practical using would require to specify $\delta$. If the true $\mu$ is modeled as a Gaussian random variable instead, an analogous theorem will be now formulated, which is much more instructive.
It is now natural to ask what happens if Model~\ref{model1} considers a prior distribution about $\mu$ instead of taking a fixed $\mu$.
%It is relevant for the situation when $\mu$ is estimated in Bayesian setup and at the same time contaminated by additional random perturbations.
In such a setup with a double uncertainty (about $\mu$ and due to the quantization), the next theorem explains that standard Bayesian estimation may be exploited combining both sources of uncertainty in an~additive way.
%The following theorem explains that such approach with a double prior randomness related to $\mu$ is reduced to a~standard Bayesian estimation.

\begin{model} %%% 3
Let us consider the model 
\begin{equation}
X_i = \mu + e_i, \quad i=1,\dots,n,
\label{e:true2}
\end{equation}
with $X_i \in \real^p$, a random vector $\mu \sim {\sf N}_p(\theta,\Psi)$ with a fixed $\theta \in \real^p$, $\Psi \in {\sf PD}(p)$, and i.i.d. random vectors $e_1,\dots,e_n$ following $e_i \sim {\sf N}_p(0, \sigma^2 \tuci_p)$ for $i=1,\dots,n$.
Instead of the true model (\ref{e:true2}), the parameter estimation will be rather considered in the model 
\begin{equation}
X_i = \xi + \varepsilon_i, \quad i=1,\dots,n, \quad \mbox{where}~\xi~\mbox{is obtained as}~\xi=\mu+\tau
\end{equation}
with a random vector $\tau \sim {\sf N}_p(0, \delta^2 \tuci_p)$ independent of $\mu$ and $\varepsilon_1,\dots,\varepsilon_n$.
\label{model3}
\end{model}

\begin{model} %%% 4
Let us consider the model 
\begin{equation}
X_i = \xi+e_i, \quad i=1,\dots,n,
\end{equation}
where i.i.d. random vectors $e_1,\dots,e_n$ follow $e_i \sim {\sf N}_p(0, \sigma^2 \tuci_p)$ for $i=1,\dots,n$ and $\xi\in\real^p$ is a random vector following $\xi \sim {\sf N}_p(\theta, \Psi+\delta^2 \tuci_p)$ with a given $\theta\in\real^p$ and $\Psi \in {\sf PD}(p)$.
\label{model4}
\end{model}

\begin{theorem}
The posterior estimator of $\xi$ in Model~\ref{model3} is the same as the posterior estimator of $\xi$ in Model~\ref{model4}.
\end{theorem}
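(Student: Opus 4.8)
The plan is to run an argument paralleling that of Theorem~\ref{th:1}, now carrying along the extra layer of uncertainty introduced by the prior on~$\mu$. First I would note that in Model~\ref{model3} the vector $\xi$ is, by construction, the sum of two \emph{independent} Gaussian vectors, $\mu \sim {\sf N}_p(\theta,\Psi)$ and $\tau \sim {\sf N}_p(0,\delta^2\tuci_p)$. Since a sum of independent normal vectors is again normal, with mean and covariance equal to the respective sums, this yields $\xi \sim {\sf N}_p(\theta,\, \Psi+\delta^2\tuci_p)$, which is precisely the prior imposed on $\xi$ in Model~\ref{model4}.

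Second, I would compare the two sampling mechanisms. Conditionally on $\xi$, Model~\ref{model3} specifies $X_i = \xi + \varepsilon_i$ with $\varepsilon_i \sim {\sf N}_p(0,\sigma^2\tuci_p)$ i.i.d.\ and independent of $\mu$, so the conditional law of $(X_1,\dots,X_n)$ given $\xi$ does not involve $\mu$ as a separate parameter; Model~\ref{model4} specifies $X_i = \xi + e_i$ with $e_i \sim {\sf N}_p(0,\sigma^2\tuci_p)$ i.i.d. These two conditional likelihoods $p(X_1,\dots,X_n \mid \xi)$ coincide. Combining this with the first step and Bayes' theorem, in both models the posterior density of $\xi$ is proportional to the same product of likelihood and prior, hence the two posterior laws — and therefore the posterior means used as the estimators — agree. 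Concretely, both reduce to the conjugate-normal formula (\ref{e:slunce}) with $\eta = \Psi+\delta^2\tuci_p$ and $n\Sigma^{-1}$ replaced by $n\sigma^{-2}\tuci_p$, and with $\bar X$ from (\ref{e:mean}).

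The one point deserving care is the marginalization step inside Model~\ref{model3}: one must justify that, for inference on $\xi$, the vector $\mu$ may be integrated out so that the effective prior on $\xi$ is its marginal ${\sf N}_p(\theta,\,\Psi+\delta^2\tuci_p)$. This holds because the data depend on the pair $(\mu,\tau)$ only through $\xi=\mu+\tau$, so $\mu$ is conditionally independent of $(X_1,\dots,X_n)$ given $\xi$; integrating the joint prior of $(\mu,\xi)$ over $\mu$ leaves exactly the claimed marginal. I expect this (rather mild) measure-theoretic bookkeeping to be the main obstacle, with the remainder being the standard conjugate Gaussian computation already recalled in Section~\ref{sec:mult}. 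An alternative, even shorter route is to simply invoke Theorem~\ref{th:1} after noting that Model~\ref{model3} is obtained from Model~\ref{model1} by placing the prior ${\sf N}_p(\theta,\Psi)$ on $\mu$ and that, since $\mu$ and $\tau$ are independent, this prior propagates additively to $\xi$.
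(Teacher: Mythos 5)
Your argument is correct and follows essentially the same route as the paper's own (very terse) proof: the key observation in both is that $\xi=\mu+\tau$ with independent Gaussian summands has the ${\sf N}_p(\theta,\Psi+\delta^2\tuci_p)$ prior of Model~\ref{model4}, while the conditional likelihood of the data given $\xi$ is identical in the two models, so the posteriors (and hence the posterior-mean estimators) coincide. Your additional remarks on marginalizing out $\mu$ and on the explicit conjugate-normal formula simply make explicit what the paper leaves implicit.
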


\begin{proof}
Let us assume $\mu \sim {\sf N}_p(\theta,\Psi)$ and $\tau \sim {\sf N}_p(0,\delta^2\tuci_p)$ independent of $\mu$. In this case, the idea is that $\xi$ defined as $\xi=\mu+\tau$ fulfils $\xi \sim {\sf N}_p(\theta, \Psi+\delta^2 \tuci_p)$.
The posterior mean of $\xi$ is the Bayes estimator under squared error loss.
\end{proof}

The connection between quantization and Bayesian statistics represents a significant shift in perspective. First, it marks a paradigm change by moving from viewing quantization as a purely deterministic, technical artifact to understanding it as an inherent part of probabilistic modeling. Second, it offers a unifying framework that bridges seemingly distinct concepts—quantization, rounding errors, and Bayesian inference—under a common statistical umbrella. Finally, this insight has practical implications: it suggests new ways to design algorithms that explicitly incorporate uncertainty introduced by quantization, potentially leading to more robust and interpretable models in real-world applications.

A limitation of our presented approach lies in the fact that quantization is modeled as deterministic post-training rounding, without considering its interaction with training dynamics or data structure; alternatively, rounding errors could be modeled stochastically, for example as random perturbations drawn from a~uniform distribution. Next, we turn to applying Bayesian principles to a specific problem: the shrinkage estimation of group means. Shrinkage can be viewed as a form of regularization, which is naturally interpreted as a Bayesian estimate with a suitable prior.

\section{Shrunken means as Bayesian estimators for two groups}%%%%%%%%%%%%%%%%%%%%%%%%%%%%%%%%%%%%%%%%%%%%%%%%%%%%%%%%%%%%%%%%%%%%%%%%%%%%%%%%%%%%%%%%%%%%%%%%%%%%%%%%%%%%%%%%%%%%%%%%%%%%%%%%%%%%%%%%%%%%%%%%%%%%%%%%%%%% kap. 4
\label{sec:two}

Assuming two independent random samples, it is common in high-dimensional applications to consider their means shrunken towards a constant value (e.g., the pooled mean). For instance, shrinking both the means and the covariance matrix jointly was studied in \cite{ber} in the context of hypothesis testing for data from multiple groups; however, the study did not compare the tests based on shrunken means with those using standard (non-shrunken) means. Other examples include the classification task of \cite{guo} or more recent classifiers based on various specialized norms~\cite{sif,li}.

In high-dimensional settings, the standard estimator of the mean often becomes highly unstable. Applying shrinkage to group means can dramatically reduce the mean squared error (MSE). Understanding the connection between shrinkage and Bayesian estimation enables the systematic derivation of shrinkage rules, such as those employed in linear discriminant analysis (LDA), which will be discussed in detail in the next section (Section~\ref{sec:lda}).

An important and common scenario involves applying shrinkage across two or more estimators, thereby effectively borrowing information across different samples or groups. While shrinkage estimators are often presented in the literature as empirically motivated techniques, their Bayesian interpretation is rarely made explicit. In this work, we provide a theoretical justification for these existing shrinkage methods, which frequently appear as ad hoc solutions or are employed without clearly acknowledging their underlying Bayesian connection.

This section focuses on justifying the use of shrinkage means within a specific Bayesian framework, assuming the covariance matrix is regular.
First, let us formulate the result for univariate data and then we extend it to the multivariate case.

\begin{lemma}
Let us assume two independent random samples 
\begin{equation}
X_1,\dots,X_n \sim {\sf N}(\mu_x, \sigma^2) \quad \mbox{and}\quad Y_1,\dots,Y_m \sim {\sf N}(\mu_y, \sigma^2)
\end{equation}
of univariate observations with a common variance $\sigma>0$. %We assume a~common covariance matrix $\Sigma\in{\sf PD}(p)$. %here and the aim is the estimation of $\mu_x$ and $\mu_y$. 
Let us assume that the prior distributions for $\mu_x$ and~$\mu_y$ are independent and of the form
\begin{equation}
\mu_x \sim {\sf N}(\theta, \gamma^2) \quad\mbox{and}\quad \mu_y \sim {\sf N}(\theta, \gamma^2)
\label{e:twoprior}
\end{equation}
with $\theta \in \real^p$ and $\gamma>0.$ Then, estimates of $\mu_x$ and $\mu_y$ obtained as the means of the posterior distribution have the form
\begin{equation}
%\mu_x = \frac{\gamma^2 n \bar{X} + \theta \sigma^2}{n \gamma^2+\sigma^2} \quad\mbox{and}\quad \mu_y=\frac{\gamma^2 m \bar{Y} + \theta \sigma^2}{m \gamma^2+\sigma^2}.
\hat{\mu}_x = \frac{n \bar{X} \sigma^{-2} + \theta \gamma^{-2}}{n \sigma^{-2} + \gamma^{-2}} \quad\mbox{and}\quad \hat{\mu}_y = \frac{m \bar{Y} \sigma^{-2} + \theta \gamma^{-2}}{m \sigma^{-2} + \gamma^{-2}}.
\label{e:twouni}
\end{equation}
\end{lemma}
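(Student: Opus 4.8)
The plan is to exploit that the two samples and the two priors are mutually independent, so the joint posterior of $(\mu_x,\mu_y)$ given all the data factorizes into the product of the marginal posterior of $\mu_x$ (given $X_1,\dots,X_n$) and that of $\mu_y$ (given $Y_1,\dots,Y_m$). It therefore suffices to treat a single normal sample with known variance $\sigma^2$ and a conjugate normal prior ${\sf N}(\theta,\gamma^2)$, and then obtain the second formula in (\ref{e:twouni}) by symmetry, replacing the pair $(\bar{X},n)$ by $(\bar{Y},m)$.

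For the one-sample problem I would first pass to the sufficient statistic: $\bar{X}$ is sufficient for $\mu_x$ and satisfies $\bar{X}\mid\mu_x \sim {\sf N}(\mu_x,\sigma^2/n)$. This is precisely the conjugate normal--normal model underlying formula (\ref{e:slunce}), taken with $p=1$, covariance $\sigma^2$, sample size $n$, and prior variance $\gamma^2$; specializing (\ref{e:slunce}) yields $\hat{\mu}_x=(n\sigma^{-2}+\gamma^{-2})^{-1}(n\sigma^{-2}\bar{X}+\gamma^{-2}\theta)$, which is the asserted expression. For a self-contained derivation one may instead multiply the prior density by the likelihood of $\bar{X}$, collect in the exponent the terms quadratic and linear in $\mu_x$, and complete the square: the coefficient of $-\tfrac12\mu_x^2$ is the posterior precision $n\sigma^{-2}+\gamma^{-2}$, the coefficient of $\mu_x$ is $n\sigma^{-2}\bar{X}+\gamma^{-2}\theta$, and hence the posterior is Gaussian with mean equal to their ratio. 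Since the posterior mean is the Bayes estimator under squared-error loss, this gives (\ref{e:twouni}).

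The computation is routine conjugate-prior algebra; the only step deserving an explicit line is the factorization of the posterior, namely that conditioning on the full data does not couple $\mu_x$ and $\mu_y$. This holds because the joint density of $(X_1,\dots,X_n,Y_1,\dots,Y_m,\mu_x,\mu_y)$ splits as a product of a factor depending only on $(X_1,\dots,X_n,\mu_x)$ and a factor depending only on $(Y_1,\dots,Y_m,\mu_y)$, so normalization preserves this product structure and the two marginal posteriors can be computed separately. The remaining minor hazard is purely bookkeeping, keeping the precision weights (the powers $\sigma^{-2}$ and $\gamma^{-2}$, not $\sigma^2$ and $\gamma^2$) in the right places; there is no substantive obstacle, and the multivariate extension announced after the lemma follows the same template with $\sigma^2$ replaced by $\Sigma$ and (\ref{e:slunce}) applied verbatim.
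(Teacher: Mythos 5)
Your proposal is correct, and it takes a somewhat different route from the paper. The paper's own proof works directly with the joint posterior of $(\mu_x,\mu_y)$: it writes the posterior density proportional to the two-sample likelihood times the product prior, sets the partial derivatives of the exponent with respect to $\mu_x$ and $\mu_y$ equal to zero, and solves the resulting (decoupled) pair of linear equations, implicitly using that for a Gaussian posterior the mode coincides with the mean. You instead make the factorization of the posterior explicit, reduce each factor to the standard one-sample conjugate normal--normal update---either by specializing (\ref{e:slunce}) to $p=1$ with $\Sigma=\sigma^2$, $\eta=\gamma^2$, or by completing the square in $\mu_x$---and obtain $\hat{\mu}_y$ by symmetry in $(\bar{Y},m)$. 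Your version buys two things: it identifies the full posterior distribution, so the claim about the posterior \emph{mean} is immediate rather than resting on the unstated mode-equals-mean property, and it reuses the formula already quoted in Section~\ref{sec:mult}. The paper's version is self-contained and its differentiation argument transfers verbatim to the multivariate analogue in Lemma~\ref{l:mult}, though your template (factorize, then apply (\ref{e:slunce}) with $\Sigma$ in place of $\sigma^2$) handles that case equally well. No gaps; the only point needing care, which you address, is the factorization of the joint posterior and keeping the precisions $n\sigma^{-2}$ and $\gamma^{-2}$ as the weights.
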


\begin{proof}
The likelihood of the data across the two random samples, up to a multiplicative constant, is %when omitting a~multiplicative constant, is proportional to
\begin{equation}
f(\mbox{data}|\mu_x,\mu_y,\sigma^2) \propto \exp \left\{-\frac{1}{\sigma^2}  \left[ \sum_{i=1}^n (X_i-\mu_x)^2 + \sum_{j=1}^m (Y_j-\mu_y)^2 \right] \right\}.
\end{equation}
The prior distribution, if assuming the distribution of $\mu_x$ to be independent from that of $\mu_y$, corresponds to two one-dimensional densities with the product
\begin{equation}
f(\mu_x)f(\mu_y) = \left( \frac{1}{2 \pi \gamma^2} \right)^2 \exp \left\{ -\frac{1}{\gamma^2} \left[ (\mu_x-\theta)^2 + (\mu_y-\theta)^2 \right] \right\}.
\end{equation}
The posterior density can be expressed as $f(\mu_x,\mu_y|\mbox{data}) \propto$ %proportional to 
\begin{equation}%\begin{align}
%&f(\mu_x,\mu_y|\mbox{data}) \propto\nonumber\\
\exp \left\{ - \frac{\sum_{i=1}^n (X_i-\mu_x)^2}{\sigma^2}- \frac{\sum_{j=1}^m (Y_j-\mu_y)^2}{\sigma^2}- \frac{(\mu_x-\theta)^2}{\gamma^2}- \frac{(\mu_y-\theta)^2}{\gamma^2}\right\}.
\end{equation}%\end{align}
Let us search for the partial derivatives of the posterior density. Using the chain rule, 
the derivatives of the inside function will be put equal to 0. %it is sufficient to find the values of $\mu_x$ and $\mu_y$ that 
Solving this set of two equations for $\mu_x$ and $\mu_y$ leads us finally to %analogues of (\ref{e:two}) in the form
\begin{equation}
\hat{\mu}_x = \frac{\gamma^2 \sum_{i=1}^n X_i + \theta \sigma^2}{n \gamma^2 + \sigma^2} \quad\mbox{and}\quad \hat{\mu}_y = \frac{\gamma^2 \sum_{j=1}^m Y_j + \theta \sigma^2}{m \gamma^2 + \sigma^2},
\end{equation}
which can be equivalently reformulated as (\ref{e:two}), i.e.~the proof is concluded. 
\label{l:two}
\end{proof}

\begin{lemma}
Let us assume two independent random samples 
\begin{equation}
X_1,\dots,X_n \sim {\sf N}_p(\mu_x, \Sigma) \quad \mbox{and}\quad Y_1,\dots,Y_m \sim {\sf N}_p(\mu_y, \Sigma)
\end{equation}
of $p$-variate observations with $p>1$. We assume a~common covariance matrix $\Sigma\in{\sf PD}(p)$. %here and the aim is the estimation of $\mu_x$ and $\mu_y$. 
Let us assume that the prior distributions for $\mu_x$ and~$\mu_y$ are independent and of the form
\begin{equation}
\mu_x \sim {\sf N}_p(\theta, \Upsilon) \quad\mbox{and}\quad \mu_y \sim {\sf N}_p(\theta, \Upsilon)
\end{equation}
with $\theta \in \real^p$ and $\Upsilon \in {\sf PD}(p)$. Then, estimates of $\mu_x$ and $\mu_y$ obtained as the means of the posterior distribution have the form
\begin{align}
%\mu_x = \frac{n \bar{X} \Sigma^{-1} + \theta \Upsilon^{-1}}{n \Sigma^{-1}+\Upsilon^{-1}} \quad\mbox{and}\quad \mu_y=\frac{m \bar{Y} \Sigma^{-1} + \theta \Upsilon^{-1}}{m \Sigma^{-1}+\Upsilon^{-1}}.
\hat{\mu}_x &= (n \Sigma^{-1} + \Upsilon^{-1})^{-1} ( n \Sigma^{-1} \bar{X} + \Upsilon^{-1} \theta), \nonumber\\% \quad \mbox{and} \quad 
\hat{\mu}_y &= (m \Sigma^{-1} + \Upsilon^{-1})^{-1} ( m \Sigma^{-1} \bar{Y} + \Upsilon^{-1} \theta).
\label{e:two}
\end{align}
\label{l:mult}
\end{lemma}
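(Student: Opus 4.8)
The plan is to exploit the factorization that the independence of the two samples and of the two priors forces on the posterior. The joint likelihood splits multiplicatively into a factor involving $\mu_x$ only through $\sum_{i=1}^n(X_i-\mu_x)^T\Sigma^{-1}(X_i-\mu_x)$ and a factor involving $\mu_y$ only through $\sum_{j=1}^m(Y_j-\mu_y)^T\Sigma^{-1}(Y_j-\mu_y)$; since the joint prior density is $f(\mu_x)f(\mu_y)$ by assumption, the posterior likewise factorizes as $f(\mu_x\,|\,X_1,\dots,X_n)\,f(\mu_y\,|\,Y_1,\dots,Y_m)$. Hence it suffices to treat the two one-sample problems separately, and by symmetry I only need to carry out the computation for $\mu_x$.

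For the $\mu_x$-marginal I would proceed exactly as in the proof of Lemma~\ref{l:two}, but with matrices in place of scalars: write the log-posterior, up to an additive constant free of $\mu_x$, as $-\tfrac{1}{2}\sum_{i=1}^n(X_i-\mu_x)^T\Sigma^{-1}(X_i-\mu_x)-\tfrac{1}{2}(\mu_x-\theta)^T\Upsilon^{-1}(\mu_x-\theta)$, expand the two quadratic forms, and use $\sum_{i=1}^n X_i=n\bar X$. Collecting the terms that are quadratic and linear in $\mu_x$ shows that, up to a constant, the exponent equals $-\tfrac{1}{2}(\mu_x-c)^T(n\Sigma^{-1}+\Upsilon^{-1})(\mu_x-c)$ with $c=(n\Sigma^{-1}+\Upsilon^{-1})^{-1}(n\Sigma^{-1}\bar X+\Upsilon^{-1}\theta)$; since the posterior is thus Gaussian, its mean equals its mode $c$, which is the asserted $\hat\mu_x$, and replacing $(n,\bar X)$ by $(m,\bar Y)$ gives $\hat\mu_y$. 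Equivalently, one can bypass the computation entirely: the $\mu_x$-marginal is \emph{literally} the model~(\ref{e:mult}) with the normal prior of Section~\ref{sec:mult}, so formula~(\ref{e:slunce}) with $\eta:=\Upsilon$ delivers $\hat\mu_x$ at once, and the same formula applied to the second sample (sample size $m$, data $Y_1,\dots,Y_m$) delivers $\hat\mu_y$.

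The only step that requires a moment's care is the matrix completing-the-square: one uses the symmetry of $\Sigma^{-1}$ and $\Upsilon^{-1}$ to match the cross terms, and one must note that $n\Sigma^{-1}+\Upsilon^{-1}$ is invertible — which it is, being a sum of positive definite matrices since $\Sigma,\Upsilon\in{\sf PD}(p)$. Everything else is bookkeeping. As a consistency check, setting $p=1$ with $\Sigma=\sigma^2$ and $\Upsilon=\gamma^2$ recovers the univariate Lemma~\ref{l:two}.
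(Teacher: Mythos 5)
Your proposal is correct. Note, however, that the paper itself gives no explicit proof of Lemma~\ref{l:mult}: only the univariate Lemma~\ref{l:two} is proved there, by differentiating the exponent of the posterior density and solving the resulting pair of equations, and the multivariate statement is left as an evident extension. Your argument supplies exactly the missing details and does so in essentially the intended way, with two small differences worth recording. First, you make the factorization of the posterior into a $\mu_x$-part and a $\mu_y$-part explicit (independence of the samples and of the priors), which is what legitimately reduces the problem to two one-sample problems; the paper's univariate proof instead solves the two stationarity equations jointly without comment. Second, you complete the square in matrix form and thereby identify the posterior of $\mu_x$ as ${\sf N}_p\bigl((n\Sigma^{-1}+\Upsilon^{-1})^{-1}(n\Sigma^{-1}\bar X+\Upsilon^{-1}\theta),\,(n\Sigma^{-1}+\Upsilon^{-1})^{-1}\bigr)$, so that the posterior mean coincides with the mode you computed; the paper's univariate proof locates the mode only and identifies it with the posterior mean tacitly, so your version is the more complete one. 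Your shortcut of invoking the one-sample formula (\ref{e:slunce}) with $\eta:=\Upsilon$ (sample sizes $n$ and $m$ respectively) is also legitimate and is the most economical route, since after the factorization each marginal problem is literally the setting of Section~\ref{sec:mult}; the observation that $n\Sigma^{-1}+\Upsilon^{-1}$ is positive definite, hence invertible, and the $p=1$ consistency check against (\ref{e:twouni}) round the argument off correctly.
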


The multivariate estimates obtained in Lemma~\ref{l:mult} are shrunken to the common value $\theta,$ in an analogy to (\ref{e:twouni}). Particularly, (\ref{e:two}) may be reformulated in the elegant form as
\begin{equation}
%\hat{\mu}_x = (1-\xi) \bar{X} + \xi \theta \quad\mbox{and}\quad \hat{\mu}_y = (1-\xi) \bar{Y} + \xi \theta, %\quad\mbox{where}\quad \xi=\frac{\Upsilon^{-1}}{n \Sigma^{-1}+\Upsilon^{-1}} \in (0,1).
\hat{\mu}_x = (\tuci_p-\Delta_x) \bar{X} + \Delta \theta \quad\mbox{and}\quad \hat{\mu}_y = (\tuci_p-\Delta_y) \bar{Y} + \delta \theta, %\quad \Delta = (n \Sigma^{-1}+\Upsilon^{-1})^{-1} \Upsilon^{-1} \in \real^{p \times p}.
%\delta=\frac{\Upsilon^{-1}}{n \Sigma^{-1}+\Upsilon^{-1}} \in (0,1).
\label{e:ele}
\end{equation}
where the matrices $\Delta_x \in  \real^{p \times p}$ and $\Delta_y \in \real^{p \times p}$ have the form
\begin{equation}
\Delta_x = (n \Sigma^{-1}+\Upsilon^{-1})^{-1} \Upsilon^{-1} \quad\mbox{and}\quad \Delta_y = (m \Sigma^{-1}+\Upsilon^{-1})^{-1} \Upsilon^{-1}.
\end{equation}
These Bayesian estimators of $\mu_x$ and $\mu_y$ depend on $\Sigma$, just like the estimator~(\ref{e:slunce}) in Section~\ref{sec:mult}.
In the specific situation with $\Upsilon^{-1}=c \Sigma^{-1}$ with $c>0$ and $n=m$, the estimators retain the form (\ref{e:ele}) while $\Delta_x$ and $\Delta_y$ simplify to a common matrix
\begin{equation}
%\hat{\mu}_x=(1-\Delta) \bar{X}+\delta \theta, \quad \hat{\mu}_y=(1-\delta) \bar{Y}+\delta \theta, \quad \delta=\frac{c}{n+c} \in(0,1).
\Delta_x = \Delta_y = \frac{c}{n+c} \tuci_p.
\end{equation}
For practical applications, it is natural to take $\theta$ in (\ref{e:twoprior}) to be the pooled mean of the data across both groups.
In other words, we justify here the commonly used approach based on shrinking the means in the context of LDA. The results of this section can be simply extended to the case with more than two groups.
On the other hand, relying on specific distributional assumptions and the availability of a~regular covariance matrix may limit the applicability of the method; furthermore, the performance and robustness of shrunken means under model misspecification remain beyond the scope of this section.

\section{Regularized estimators within linear discriminant analysis}%%%%%%%%%%%%%%%%%%%%%%%%%%%%%%%%%%%%%%%%%%%%%%%%%%%%%%%%%%%%%%%%%%%%%%%%%%%%%%%%%%%%%%%%%%%%%%%%%%%%%%%%%%%%%%%%%%%%%%%%%%%%%%% kap. 5
\label{sec:lda}

Regularized versions of LDA are studied in this section, along with novel algorithms for their computation. These methods leverage the shrinkage estimation of means discussed in Section~\ref{sec:two}, where shrinkage of the means represents one part of the overall regularization strategy. Another crucial component is the regularization of the covariance matrix. This technique is primarily applied to ensure numerical stability and invertibility, especially in high-dimensional settings where the covariance matrix may be ill-conditioned or singular. Together, these regularization approaches enable more stable and reliable classification.
In fact, LDA tends to perform poorly in high-dimensional settings; incorporating regularization enhances both its generalization capability and numerical stability, making it better suited for such challenging scenarios.

Linear discriminant analysis (LDA) is a classification method for $p$-variate observations
\begin{equation}
X_{11},\dots,X_{1n_1}, \dots, X_{K1},\dots,X_{Kn_K}
\label{e:kgroups}
\end{equation}
observed in $K$ different groups with $p > K \geq 2$. We assume the data in each of the $k=1,\dots,K$ groups to follow the normal ${\sf N}_p(\mu_k,\Sigma)$ distribution and
allow the data to be high-dimensional allowing $\min_{k=1,\dots,K} n_k <p$. 
If the data are high-dimensional, which is a common situation e.g.~in molecular genetics, econometrics or chemistry, using a regularized estimate of $\Sigma$ is typically highly beneficial~\cite{bos}. While the means of each group are commonly computed as arithmetic means, their regularization might be beneficial as well. Shrinking the means towards the pooled mean across groups was used e.g.~in~\cite{fu}, however without pointing out at the Bayesian connections discussed above.
%An overview of possible regularized versions of LDA is given in this section together with a proposal of two novel algorithms for regularized LDA.

Regularized estimation of $\Sigma$ will be considered here as a~replacement of the pooled empirical covariance matrix $S$ by the shrinkage version~$\tilde{S}$ 
(\ref{e:shrsigma}) depending on $\lambda \in [0,1]$. %We recall the Bayesian connection of~$\tilde{S}$ discussed in Section~\ref{sec:mult}. 
The Bayesian connections of $\tilde{S}$ were discussed already in Section~\ref{sec:mult}.
Let us remark that $\tilde{S}$ evaluates the covariance structure around $\bar{X}$; we are not aware of any alternative approach considering the covariance structure around a regularized mean. 

Let us now formulate several possible versions of regularized means inspired by Section~\ref{sec:two}, which correspond to different forms of regularization.
They depend on regularization parameters and it is natural to select the values that lead to the best classification performance; such values can be found in a cross-validation.
We use the notation $\bar{X}$ for the pooled mean, $\bar{X}_k$ for the mean of the $k$-th group, and $\indic$~for indicator function.

\begin{definition}[Regularized means for $k=1,\dots,K$]
\label{def:mean}
\begin{enumerate}\mbox{}\\[0.1ex]
\item
  \begin{equation}
  \bar{X}_k^{(2)} = (1-\delta^{(2)}) \bar{X}_k + \delta^{(2)} \bar{X}, \quad \delta^{(2)}\in(0,1).
	\label{mean:2}
  \end{equation}
\item
  \begin{eqnarray}
  \bar{X}_{k}^{(1)} 
  &=& {\sf sgn}(\bar{X}_k) \max\left\{ |\bar{X}_k| - \delta^{(1)}, 0\right\} \nonumber\\ 
	&=& {\sf sgn}(\bar{X}_k) \left( |\bar{X}_k|-\delta^{(1)}\right)_+, ~~\delta^{(1)}\in(0,1).
  \label{e:soft}
  \end{eqnarray}
\item
  \begin{eqnarray}
	\bar{X}_k^{(0)}
  &= &\bar{X}_k \indic\left[ |\bar{X}_k| > \delta^{(0)} \right]\nonumber\\
	&= &\bar{X}_k \indic\left[ \bar{X}_k \notin (-\delta^{(0)}, \delta^{(0)})\right]\nonumber\\
	&= &\left\{ \begin{array}{lll}
         \bar{X}_k, && |\bar{X}_k| \geq \delta^{(0)},\\
				 0, && |\bar{X}_k| \leq \delta^{(0)}.\end{array} \right.
	\label{ehard}
	\end{eqnarray}
\end{enumerate}
\end{definition}

%\begin{equation}
%l_k^{\dagger} = {\bar {X}}_k^{'T} (S^*)^{-1} Z - \frac{1}{2} {\bar{X}}_k^{'T} (S^*)^{-1} {\bar{X}}^{'}_k + \log \pi_k
%\end{equation}
%and $\bar{X}_k^{'}$ denotes the shrunken mean of the $k$-th group towards the overall mean computed across groups. 
%As another contrast with the habitually used algorithm of SCRDA %\cite{guo} (we justify \cite{guo}), 

Both (\ref{e:soft}) and (\ref{ehard}) may yield sparse solutions, which is particularly appealing in classification tasks involving high-dimensional data, as illustrated in image analysis applications~\cite{kalmat}. In particular, hard thresholding (\ref{ehard}) may result in some coordinates being set to zero; if a given coordinate is zero across all groups~$k$, the corresponding variable effectively drops out—this constitutes a form of feature selection. The hard thresholding in (\ref{ehard}) is inspired by image denoising applications of \cite{dav}, where the goal is to remove noise while preserving important features by setting small coefficients to zero.

None of these estimators is universally optimal. Regularization based on the $L_2$-norm often outperforms $L_1$-based methods in various statistical models~\cite{pio}, as it tends to work well when all variables contribute meaningfully to the prediction. Conversely, $L_1$-based approaches are more appropriate when only a small number of dominant variables are expected to be relevant. The hard thresholding in~(\ref{ehard}) takes a more aggressive sparsity-inducing stance by directly setting small coefficients to zero rather than continuously shrinking them, which can be particularly effective when a clear zero/non-zero distinction is desired. Moreover, in situations where continuous shrinkage of average values compromises interpretability, the hard thresholding (\ref{ehard}) provides an appealing alternative by enforcing explicit sparsity through hard cutoffs.

The Bayesian interpretation of the soft-thresholding estimator~(\ref{e:soft}),  
in which a~Laplace prior induces shrinkage, provides a probabilistic justification for its use.  
In contrast, to our knowledge, no analogous Bayesian framework has been established for  
the hard-thresholding estimator~(\ref{ehard}), suggesting an interesting direction for future research.

%\vspace{3mm}
To our knowledge, no version of LDA using~(\ref{ehard}) has been proposed in the literature; we introduce it here as a simplified alternative to~(\ref{e:soft}), in which no shrinkage is applied when $\bar{X}_k \geq \delta^{(0)}$.
The version of LDA, where only $S$ is regularized but not the means, will be denoted here simply as RLDA. 
Versions with regularization applied also on the means will be denoted as RLDA(0), RLDA(1), and RLDA(2) according to the regularization type applied on the means according to Definition~\ref{def:mean}. 
We consider $Z \in \real^p$ as an observation to be classified. The classification rule assigns $Z$ to the group that is obtained by 
%if $l_k^{\dagger} > l_j^{\dagger}$ for every $j \neq k$.
\begin{equation}
\tilde{k} := \argmax_{k=1,\dots,K} l_k,
\label{e:optk}
\end{equation}
where $l_k$ is the discriminant score. %The regularized versions of LDA can be interpreted as based on $L_2$-regularized Mahalanobis distances. 
%This is a joint estimator of both the mean and the covariance matrix of multivariate data with promising properties. % and use $\tilde{S}$ defined in (\ref{e:shrsigma}). 
Prior knowledge about group membership will be denoted as $\pi_k$ for the group $k=1,\dots,K$, where $\sum_{k=1}^K \pi_k=1$. 
The formal definition follows.

\begin{definition}[Discriminant scores of regularized LDA for $k=1,\dots,K$]
\label{def:lda}
\begin{enumerate}\mbox{}\\[0.1ex]
%\item[\textbf{1.}] \textbf{Plain RLDA.}
\item Plain RLDA
  \begin{equation}
  \tilde{\ell}_k = \bar{X}_k^T \tilde{S}^{-1} {Z}  - \frac{1}{2} \bar{X}_k^T \tilde{S}^{-1} \bar{X}_k + \log \pi_k.
	\label{lda1}
  \end{equation}
\item RLDA$(\omega)$ for $\omega\in \{0,1,2\}$
  \begin{equation}
  \tilde{\ell}_k^{(\omega)} = \left(\bar{X}_k^{(\omega)}\right)^T \tilde{S}^{-1} Z - \frac{1}{2} \left(\bar{X}_k^{(\omega)}\right)^T \tilde{S}^{-1} \bar{X}_k^{(\omega)} + \log \pi_k.
	\label{e:optlda}
  \end{equation}	
\end{enumerate}
\end{definition}

\begin{algorithm}[t] %%%%%%%%%%%%%%%%%%%%%%%%%%%%%%%%%%%%%%%%%%%%%%% Alg. 1
\caption{RLDA(2) for a general $T$ based on Cholesky decomposition.}
\label{alg:1}
%\algsetup{indent=2em}
\begin{algorithmic}[1]
\REQUIRE Data (\ref{e:kgroups}) and $Z \in \real^p$.
%\REQUIRE 
\REQUIRE $T \in {\sf PD}(p)$.
\REQUIRE $\delta \in [0,1]$ and $\lambda \in [0,1]$.
\REQUIRE $\pi_1,\dots,\pi_K$ with $\sum_{k=1}^K \pi_k=1$.
\ENSURE Classification decision for given $Z \in \real^p$.
%\STATE \begin{equation}
	%\bar{X}_{k,\delta}^{'} - Z = (1-\delta) \bar{X}_k + \delta \bar{X} - Z, \quad \delta \in [0,1].
	%\label{eprime}
	%\end{equation}
%\STATE %Compute the matrix 
  %\begin{equation}
	%A_\delta = \left[\bar{X}_{1,\delta}^{'} - Z,\,\dots\,,\bar{X}_{K,\delta}^{'} - Z\right] \in \real^{p \times K}
	%\end{equation}
	%of size $p\times K$ whose $k$-th column is equal to 
\STATE Compute $\tilde{S}$ according to 
 \begin{equation}
  \tilde{S} = (1-\lambda)S + \lambda T.
	\end{equation}
 %with a fixed $\lambda \in [0,1)$.
\STATE Compute the Cholesky factor $L_\lambda$ of $\tilde{S}$ as
  \begin{equation}
 \tilde{S} = L_\lambda L_\lambda^T,
	\end{equation}
	where $L_\lambda$ is a nonsingular lower triangular matrix.
\STATE Compute the matrix 
  \begin{equation}
  B^{\delta \lambda} = L_\lambda^{-T} \left[(1-\delta)\bar{X}_1+\delta \bar{X} - Z,\,\dots\,,(1-\delta)\bar{X}_K+\delta \bar{X} - Z\right],
  \end{equation}
	where $L_\lambda ^{-T}= (L_\lambda^{-1})^T = (L_\lambda^T)^{-1}$,
  %and assign $Z$ to group $k$ if the column of $B$ with largest Euclidean norm is the $k$-th column.
	and assign $Z$ to group $k$, if %the column of $B$ with largest Euclidean norm is the $k$-th column.
	\begin{equation}
	k = \argmax_{j=1,\dots,K} \left\{ ||B_j^{\delta \lambda}||^2_2 + \log \pi_k\right\},
	\end{equation}
	where $||B_j^{\delta \lambda}||^2_2$ is the squared Euclidean norm of the $j$-th column of $B^{\delta \lambda}$.
%\STATE Repeat steps 2 to 4 with different values of $\delta$ and $\lambda$ and find the classification rule with the best classification performance.
%\STATE In a cross-validation, find optimal values of $\delta$ and $\lambda$ (say $\tilde{\delta}$ and $\tilde{\lambda}$) with the best classification performance.
%\STATE The resulting rule is based on (\ref{e:optlda}) with $\tilde{\delta}$ and $\tilde{\lambda}$.
\end{algorithmic}
\end{algorithm}

A direct computation of the discriminant scores in (\ref{lda1}) and (\ref{e:optlda}) is highly inefficient due to the need to invert the empirical covariance matrix. Therefore, we present two efficient algorithms for computing regularized LDA.
Both use fixed regularization parameters $\delta$ and $\lambda$ that require to be optimized within cross-validation.
Its advantage is also avoiding the need to specify the unknown~$\Sigma$, which serves as a nuisance parameter here, and to specify hyperparameters (i.e.~parameters of the underlying prior).
Algorithm~\ref{alg:1} is based on Cholesky decomposition of $\tilde{S}$. The optimal $k$ in (\ref{e:optlda}) exists with probability 1.

Let us also consider a~specific form of regularization, namely the ridge regularization (as denoted e.g.~in \cite{has2}) in the form
\begin{equation}
S^* = \lambda S + (1-\lambda) \tuci_p, \quad \lambda \in [0,1].
\label{esi}
\end{equation}
We use the notation 
\begin{equation}
\tilde{X}  = [X_{11}-\bar{X},\dots,X_{1n_1}-\bar{X}, \dots, X_{K1}-\bar{X},\dots,X_{Kn_K}-\bar{X}]^T \in \real^{n \times p}
\label{eY}
\end{equation} 
and express the pooled estimator $S$ in the form $S = \tilde{X}^T \tilde{X}$. 

Algorithm~\ref{alg:1}, which is based on the Cholesky decomposition, is efficient and stable for well-conditioned covariance matrices but may fail or become unstable when the matrix is nearly singular or not positive definite. In contrast, Algorithm~\ref{alg:2}, relying on singular value decomposition, provides greater numerical stability and robustness in such challenging cases, albeit at a higher computational cost. Despite this, the numerical results produced by both algorithms are expected to be highly comparable. Both methods have similar asymptotic computational complexity of order \( O(p^3) \), where \( p \) denotes the size of the covariance matrix. Nevertheless, Cholesky decomposition is typically 2 to 4 times faster than SVD in practical runtime measurements~\cite{whi}.

\begin{algorithm}[t]%%%%%%%%%%%%%%%%%%%%%%%%%%% Alg. 2
\caption{RLDA(2) for the ridge regularization.}
\label{alg:2}
%\algsetup{indent=2em}
\begin{algorithmic}[1]
\REQUIRE Data (\ref{e:kgroups}) with $n<p$ and $Z \in \real^p$.
\REQUIRE $\delta \in [0,1]$ and $\lambda \in [0,1]$.
\REQUIRE $\pi_1,\dots,\pi_K$ with $\sum_{k=1}^K \pi_k=1$.
\ENSURE Classification decision for given $Z \in \real^p$.
\STATE Compute the matrix $\tilde{X}$ as in (\ref{eY}).
  %\begin{equation}
	%A_\delta  = \left[\bar{X}_{1,\delta}^{'} - Z,\,\dots\,,\bar{X}_{K,\delta}^{'} - Z\right] 
	%\end{equation}
	%of size $p\times K$ whose $k$-th column is $\bar{X}_{k,\delta}^{'} - Z$ as in (\ref{eprime}) and compute the matrix $Y$ in (\ref{eY}).
\STATE Compute the singular value decomposition of $\tilde{X}$ as
  \begin{equation}
  \tilde{X}= P\Omega Q^{T},
  \end{equation}
  with singular values $\{ \omega_1,\dots,\omega_n \}$ and complement these singular values with $p-n$ zero values $\omega_{n+1}=\dots=\omega_{p}=0.$ 
\STATE %Estimate the sample variances $\hat{\sigma}^2_1, \dots, \hat{\sigma}_p^2$ of the columns of $\tilde{X}$
       Estimate the variances of the columns of $\tilde{X}$, denoted by $\hat{\sigma}_1^2, \dots, \hat{\sigma}_p^2$
\STATE For a fixed $\lambda \in [0,1],$ compute
  \begin{equation}
  D_\lambda = {\sf diag} \{\lambda\hat{\sigma}_1^2 + (1-\lambda),\dots, \lambda\hat{\sigma}_p^2 + (1-\lambda)\},
	\label{edstar}
  \end{equation}
	where ${\sf diag}$ denotes a diagonal matrix.
\STATE Compute the matrix 
  \begin{equation}
	B^{\delta \lambda} = D_\lambda^{-1/2}Q^{T} \left[(1-\delta)\bar{X}_1+\delta \bar{X} - Z,\,\dots\,,(1-\delta)\bar{X}_K+\delta \bar{X}  - Z\right] 
	\end{equation}
	%and assign $Z$ to group $k$ if the column of $B$ with largest Euclidean norm is the $k$-th column.
	and assign $Z$ to group $k$, if %the column of $B$ with largest Euclidean norm is the $k$-th column.
	\begin{equation}
	k = \argmax_{j=1,\dots,K} \left\{ ||B_j^{\delta \lambda}||^2_2 + \log \pi_k\right\}.
	\end{equation}
	%where $||B_j^{\delta \lambda}||^2$ is the Euclidean norm of the $j$-th column of $B^{\delta \lambda}$.
%\STATE Repeat steps 2 to 4 with different values of $\delta$ and $\lambda$ and find the classification rule with the best classification performance.
%\STATE In a cross-validation, find optimal values of $\delta$ and $\lambda$ (say $\tilde{\delta}$ and $\tilde{\lambda}$) with the best classification performance.
%\STATE The resulting rule is based on (\ref{e:optlda}) with $\tilde{\delta}$ and $\tilde{\lambda}$.
\end{algorithmic}
\end{algorithm}

Slight modifications can be used to adapt the algorithms for other LDA versions of Definition~\ref{def:lda}.
All the described versions of LDA, which are extensions of simple regularized LDA versions of \cite{kaljur}, are sensitive to outliers, because the mean and empirical covariance matrix are vulnerable to the presence of outliers in the data. A desirable robustification can be performed by exploiting a suitable highly robust estimator, such as the minimum weighted covariance determinant (MWCD) estimator~\cite{roe,kalbea}. 
Moreover, regularization enables the extension of LDA to high-dimensional settings, including the incorporation of robust estimators such as those described in \cite{hub}. A limitation of this approach is that the decision rule remains linear, relying on the assumption of normally distributed classes with a~common covariance matrix; this may reduce classification performance when group distributions are non-Gaussian or exhibit different covariance structures.

%To avoid confusion, the asymptotically optimal value of~$\lambda$ will be denoted by $\lambda^{\dagger}$ and the corresponding shrinkage covariance matrix by
%\begin{equation}
%S^\dagger = \lambda^\dagger S + (1- \lambda^\dagger) T.
%\end{equation}
%Finding $\lambda$ in cross-validation is also possible. 

\section{Experiments}%%%%%%%%%%%%%%%%%%%%%%%%%%%%%%%%%%%%%%%%%%%%%%%%%%%%%%%%%%%%%%%%%%%%%%%%%%%%%%%%%%%%%%%%%%%%%%%%%%%%%%%%%%%%%%%%%%%%%%%%%%%%%%%%%%%%%%%%%%%%%%%%%%%%%%%%%%%%%%%%%%%%%%%%%%%%%%%%%%%%%%%%%%%%%%% kap. 6
\label{sec:ex}

The aim of the experiments is to study the regularized LDA of Section~\ref{sec:lda} on real as well as simulated data.

As a real dataset, we use the genetic data coming from the cardiovascular genetic study performed in the Center of Biomedical Informatics in Prague. 
The data are gene expressions for 38\,590 gene transcripts measured for 46 patients with acute myocardial infarction (AMI) and 49 control persons, previously analyzed in \cite{kalbea}, 

In a simulation study, we randomly generate two independent random samples both with $n=m=50$ observations and $p=1000$ variables.
The first random sample is generated from ${\sf N}_p(0, \Sigma)$ with $\Sigma = \sigma^2 [(1-c)\tuci_p + c \jedna_p \jedna_p^T]$,
where $\sigma=1$, $c=0.4$, and $\jedna_p = (1,\dots,1)\in\real^p$. The second random samples is generated from ${\sf N}_p(\mu_y, \Sigma)$,
where $\mu_y= (3,3,3,3,3, 0,\dots,0)^T\in\real^p$ is the vector with the first 5~coordinates being non-zero.

The methods used to analyze the two datasets are listed in Table~\ref{tab:ex}.
Because standard LDA cannot be computed due to a singular covariance matrix, we use regularized versions using the shrinkage targets
$T_1=\tuci_p$ or $T_2=\sigma^2 \tuci_p + \vartheta (\jedna_p \jedna_p^T - \tuci_p)$ with $\vartheta=0.15$.
All our implementations use the Cholesky decomposition of the covariance matrix.
If the means are also regularized, then the regularization types of Definition~\ref{def:mean} are used.
For finding the regularization parameters, 5-fold cross-validation is typically used.
If only the covariance matrix is regularized, we also use the asymptotically optimal values of the regularization parameter; this was derived for $T_1$ as well as~$T_2$ in \cite{led2003}.
For comparisons, prediction analysis of microarrays (PAM) was computed with default regularization for the means; no regularization of the covariance matrix is needed due its diagonal structure \cite{tib}.

The results are presented in Table~\ref{tab:ex}. PAM has a weaker classification performance than any of the regularized LDA versions.
The best result is obtained in both datasets when $T_2$ is used as a shrinkage target and when doing the hard thresholding for the means.
Results obtained with the shrinkage target $T_1$, which is the most common type, stay only behind.
Regularizing the means improves the performance in every case compared to situations with standard means.

The choice of regularized LDA with $T_2$ and $L_1$-regularization for the means exactly corresponds to SCRDA of \cite{guo} implemented in the {\tt rda} package of {\sf R} software~\cite{rda}; the best results obtained with $T_2$ and hard thresholding for the means overcome those 
of \cite{guo} remarkably. 
We can say that hard thresholding is most effective when the chosen regularization corresponds to a strong prior belief and the threshold is suitably selected.
In the genetic dataset, it turns out that a set of a few hundred of variables is responsible for the classification between the two groups; such finding corresponds to the analysis of \cite{kalbea}. In the simulation, the most successful LDA versions exploit only 5 variables,
which corresponds to the true number of variables where the shift between the two groups occurs.

It is also worth noting that the shrinkage target $T_2$ precisely corresponds to the data-generating design used in the simulation;
for the genetic data, $T_2$ is apparently a reasonable target as well, as the variables exhibit non-negligible correlations, making a diagonal covariance structure inappropriate.

The asymptotic regularization parameter values turn out to be overcome by those found in the cross-validation; this is because the approach of Ledoit-Wolf~\cite{led} does not consider regularized means and the optimality of the regularization parameter holds only asymptotically and would be relevant only for larger numbers of observations. We may conclude that not only the regularization as such, but also the way of finding the parameters are crucial for the classification performance.

\begin{sidewaystable} % <-- HERE
%\begin{table}
%\vspace{120mm}
\centering
%\resizebox{\textwidth}{!}{
\begin{tabular}{|c|ccc|ccr|ccr|}
\hline
%Index &Dataset& Response variable & $n$ & $p$ & Source & Contam. & Normal. & Heter. & \multicolumn{2}{|c}{$R^2$} & \multicolumn{2}{|c|}{$R^2_{LTS}$ with $h=0.65$}\\% &(a)&(b)&(c)\\ %& \multicolumn{2}{|c|}{$R^2_{LWS}$}
%\hline
%1   & Aircraft & Cost            & 23   & 4  & \cite{rob}               &0.04&0.69&0.07  & 0.88 & (0.76, 0.97)  & 0.93 & (0.88, 1.00)\\% &&&\\   %& 0.90 & (0.86, 0.97)
         &             &             &                 & \multicolumn{3}{|c|}{Genetic data} & \multicolumn{3}{c|}{Simulated data}\\
Classif. & Cov.~matrix & Expectation & Choice of       &           &      & \# of     &          &      & \# of \\
method   & $T$         & Reg.~type   & reg. parameters & Accuracy  & (SD) & variables & Accuracy & (SD) & variables\\
\hline
PAM & None & $L_1$ & CV &0.69&(0.05)&541   &0.75&(0.05)&51\\
LDA & $T_1$ & - & CV     &0.77&(0.03)&38\,590   &0.84&(0.03)&1000\\
LDA & $T_1$ & - & LW     &0.74&(0.03)&38\,590   &0.82&(0.03)&1000\\
LDA & $T_2$ & - & CV     &0.79&(0.03)&38\,590   &0.86&(0.03)&1000\\
LDA & $T_2$ & - & LW     &0.76&(0.03)&38\,590   &0.84&(0.03)&1000\\
LDA & $T_1$ & $L_2$ & CV &0.78&(0.02)&38\,590   &0.86&(0.03)&1000\\
LDA & $T_1$ & $L_1$ & CV &0.82&(0.02)&287       &0.88&(0.03)&21\\
LDA & $T_1$ & (\ref{ehard}) & CV  &0.83&(0.02)&260       &0.88&(0.03)&14\\
LDA & $T_2$ & $L_2$ & CV &0.81&(0.02)&38\,590   &0.90&(0.03)&1000\\
LDA & $T_2$ & $L_1$ & CV &0.83&(0.02)&222       &0.91&(0.03)&5\\
LDA & $T_2$ & (\ref{ehard}) & CV  &0.86&(0.02)&213       &0.92&(0.03)&5\\
\hline
\end{tabular}
%}
\caption{Results of the experiments of Section~\ref{sec:ex}. The classification accuracy, its standard deviation evaluated in 5-fold cross-validation, and the number of selected variables used in the classification rule are reported.
Methods not performing variable selection use as many as $p$ variables. Different methods are compared, using the regularization parameters obtained in 5-fold cross-validation (CV) or using the Ledoit-Wolf asymptotics (LW). 
}
\label{tab:ex}
%\end{table}
\end{sidewaystable}

\section{Conclusion}%%%%%%%%%%%%%%%%%%%%%%%%%%%%%%%%%%%%%%%%%%%%%%%%%%%%%%%%%%%%%%%%%%%%%%%%%%%%%%%%%%%%%%%%%%%%%%%%%%%%%%%%%%%%%%%%%%%%%%%%%%%%%%%%%%%%%%%%%%%%%%%%%%%%%%%%%%%%%%%%%%%%%%%%%%%%%%%%%%%%%%%%%%%%%%% kap. 7
\label{sec:con}

Regularized (shrinkage) estimators have already been exploited in a variety of multivariate applications. Meanwhile, regularized versions of LDA represent popular tools for the analysis of high-dimensional data.
Importantly, the theoretical results and proposed estimators are derived for finite sample sizes and do not rely on asymptotic approximations, which makes them directly applicable in practical scenarios.
This paper justifies some regularized estimators by deriving them within specific Bayesian contexts. 
At the same time, regularization is known to be related to local robustness, i.e.~to the known resistance of regularized estimators to local shift modification of the majority of the data~\cite{hof,xie}, which can be explained within the framework of robust optimization \cite{xan}.

The proposed methods were illustrated by numerical experiments on both real and simulated data, supporting their practical relevance.
Experiments not shown here verified that the results for Algorithm~\ref{alg:2} are very close to those of Algorithm~\ref{alg:1}, if the ridge regularized is used.

The Bayesian approach is explained in Section~\ref{sec:rounding} to allow also a different and original interpretation: it corresponds to training the parameters perturbed by errors. As a consequence, 
researchers interested in the effect of post-training rounding of the parameters or computing in a low-precision arithmetic should focus on Bayesian (i.e.~regularized) estimation.

The Bayesian thinking used here may also be applied to a~number of other multivariate contexts. 
Because regularized LDA is non-robust to outliers, we plan to perform future research of novel robust regularized classification methods based on the highly robust MWCD estimator \cite{roe}.

Among the main limitations of the presented approach are (1) the need to specify a prior distribution, which is an inherent feature of any Bayesian framework, and (2) the sensitivity to model misspecification, a drawback stemming from the parametric nature of the model rather than from Bayesian inference itself. Nevertheless, in practice, cross-validation can be used to select regularization parameters that correspond to suitable prior choices, thus mitigating the impact of subjective prior specification.
In addition, a certain degree of robustness can be achieved in robust LDA by using robust estimators of the mean and covariance matrix~\cite{hub2024}.

\section*{Acknowledgement}

The research was supported by the grant 25-15490S of the Czech Science Foundation.
The author would like to thank Michal Jel\'\i{}nek and Anton Matis, who were supported by the program Strategy AV21 ``AI: Artificial Intelligence for Science and Society''.
We thank the Institute of Computer Science of the Czech Academy of Sciences for its support and congratulate it on 50 years of excellence.

\bibliographystyle{actaplain}
\bibliography{kalina}
\end{document}